\pgfplotsset{width=10\columnwidth /10, compat = 1.13, 
	height = 55\columnwidth /100, grid= major, 
	legend cell align = left, ticklabel style = {font=\scriptsize},
	every axis label/.append style={font=\small},
	legend style = {font=\footnotesize},title style={yshift=-7pt, font = {\small\sffamily}} }
\def\BibTeX{{\rm B\kern-.05em{\sc i\kern-.025em b}\kern-.08em
    T\kern-.1667em\lower.7ex\hbox{E}\kern-.125emX}}
\newtheorem{assumption}{Assumption} 
\newtheorem{definition}{Definition} 
\newtheorem{lemma}{Lemma}
\newtheorem{proposition}{Proposition}
\newtheorem{theorem}{Theorem}
\begin{document}
\title{Distributed Risk-Sensitive Safety Filters for Uncertain Discrete-Time Systems}
\author{Armin Lederer, \IEEEmembership{Member, IEEE}, Erfaun Noorani, \IEEEmembership{Member, IEEE}, and Andreas Krause, \IEEEmembership{Fellow, IEEE}
\thanks{This work was supported as a part of NCCR Automation, a National Centre of Competence in Research, funded by the Swiss National Science Foundation (grant number 51NF40 225155).}
\thanks{A. Lederer is with the Department of Electrical and Computer Engineering, College of Design and Engineering, National University of Singapore, Singapore (email: armin.lederer@nus.edu.sg)}%
\thanks{E. Noorani can be contacted at enoorani@umd.edu.}%
\thanks{A. Krause is with the Learning \& Adaptive Systems Group, Department of Computer Science, ETH Zurich, Switzerland (email:  krausea@inf.ethz.ch).}%
}

\maketitle
\thispagestyle{empty}

\begin{abstract}
Ensuring safety in multi-agent systems is a significant challenge, particularly in settings where centralized coordination is impractical. In this work, we propose a novel risk-sensitive safety filter for discrete-time multi-agent systems with uncertain dynamics that leverages control barrier functions (CBFs) defined through value functions. Our approach relies on centralized risk-sensitive safety conditions based on exponential risk operators to ensure robustness against model uncertainties. We introduce a distributed formulation of the safety filter by deriving two alternative strategies: one based on worst-case anticipation and another on proximity to a known safe policy. By allowing agents to switch between strategies, feasibility can be ensured. Through detailed numerical evaluations, we demonstrate the efficacy of our approach in maintaining safety without being overly conservative.
\end{abstract}

\begin{IEEEkeywords}
Distributed control, 
uncertain systems,
constrained control,
data-driven control.
\end{IEEEkeywords}

\section{Introduction}
\label{sec:introduction}
\IEEEPARstart{M}{ulti-agent} systems (MASs) can be used to model a variety of applications with possible scenarios ranging from cooperative manipulation, where multiple robots jointly move an object they physically interact with, to robot swarms, in which large numbers of autonomous 
robots operate in a shared space. This has led to a growing interest in distributed reinforcement learning (RL) techniques in recent years to optimize the performance of MASs \cite{gronauer_multi-agent_2022}.
However, the assurance of safety requirements, e.g., restrictions on interaction forces in cooperative manipulation or collision constraints between agents in swarms, remains a challenging problem.

In single-agent systems, safety filters are an effective tool for rendering nominal policies safe \cite{wabersich_data-driven_2023}. Control barrier functions (CBFs) \cite{Ames2017} are a particular form of safety filters, which are frequently used due to their comparatively low computational complexity.  
While there is a rich literature on CBFs for continuous-time, control-affine systems including topics such as robustness against model uncertainties and learning of CBFs \cite{cohen2024safety}, these approaches do not directly extend to the discrete-time setting \cite{agrawal_discrete_2017}. In this setting, it has been shown that CBFs can be effectively learned from roll-out trajectories \cite {so_how_2024},
which can be combined with the optimization of trajectories via RL \cite{Curi2022}. Although CBFs have been developed for deterministic systems, they can be adapted to the stochastic setting~\cite{cosner_robust_2023}. Risk-sensitive formulations offer particular advantages in this setting as they allow to tune the aversion against uncertainty due to stochasticity, which can reduce the conservatism of safety conditions \cite{Ahmadi2022}. This can be particularly beneficial when learning CBFs for stochastic systems with uncertain dynamics from trajectories \cite{lederer_risk-sensitive_2023}.

For continuous-time multi-agent systems, many of the results for single-agent CBFs can be straightforwardly adapted by exploiting the linearity of CBF-based safety conditions \cite{GARG2024100948}. However, this is generally not possible for discrete-time dynamical systems \cite{ahmadi_safe_2019}. Therefore, CBFs must be specifically designed to allow a distributed implementation, e.g., for collisions constraints \cite{cheng_safe_2020}. While roll-out based approaches for learning CBFs can be employed to automate the design of CBFs for deterministic dynamics with perfect knowledge of the MAS in a centralized approach \cite{zhang_discrete_2025}, the resulting barrier functions require a locally coordinated computation of control inputs. 
This requirement makes it challenging to apply CBFs in distributed safety filters that run independently on each agent. Thus, the design of distributed safety filters remains an open challenge, especially when accounting for process noise and uncertain system dynamics.\looseness=-1

In this paper, we propose a novel approach for distributed risk-sensitive safety filters for uncertain discrete-time dynamical systems which exploits CBFs defined through value functions. Our contributions are threefold: 
\begin{itemize}
    \item \textbf{Learning risk-sensitive CBF conditions for safety:} We derive centralized risk-sensitive safety conditions that employ exponential risk operators to CBFs instead of  expectations. By expressing the CBF through optimal value functions, our approach naturally lends itself to obtaining CBFs via off-the-shelf RL methods.
    \item \textbf{Distributed formulation of safety filter:} We derive two adaptations of the centralized CBF conditions to the distributed setting in order to enable their independent evaluation on each agent in a distributed safety filter. We propose a switching mechanism for selecting the employed safety condition which ensures the feasibility and limits the conservatism of our distributed safety filter.
    \item \textbf{Detailed numerical evaluation:} We provide a detailed numerical analysis of the parameter dependency of the proposed safety filter and compare it to a centralized baseline. The favorable properties of our approach are demonstrated in two scenarios which also provide insights into the filter's behavior with growing numbers of agents.
\end{itemize}

\section{Problem Setting}\label{sec:problem}

We consider a MAS with $M\in\mathbb{N}$ agents,
where the dynamics of a single agent are described by
\begin{align}\label{eq:dynamics_individual}
    \bm{x}_{k+1}^i=\bm{f}^i(\bm{x}_k,\bm{u}_k^i,\bm{\omega}_k^i).
\end{align}
{The overall system state $\bm{x}_k=[(\bm{x}_k^1)^T,\ldots,(\bm{x}_k^M)^T]^T$ is the concatenation of individual agent states $\bm{x}^i_k\in\mathbb{R}^{d_x}$, $\bm{u}_k^i\in\mathbb{R}^{d_u}$
\parfillskip=0pt
\newline 
$~$

\vspace{-0.3cm}
\noindent 
is the control input for the $i$-th agent, $\bm{\omega}_k^i\in\mathbb{R}^{d_\omega}$ denotes the process noise of agent $i$ sampled from some zero-mean 
distribution $\rho$, and $\bm{f}^i:\mathbb{R}^{Md_x}\times\mathbb{R}^{d_u}\times\mathbb{R}^{d_\omega}\rightarrow \mathbb{R}^{d_x}$ is the transition function of agent $i$, which we assume to be unknown.
We denote the overall noise, control input, and dynamics by $\bm{\omega}_k=[(\bm{\omega}_k^1)^T,\ldots,(\bm{\omega}_k^M)^T]^T$ and $\bm{u}_k=[(\bm{u}_k^1)^T,\ldots,(\bm{u}_k^M)^T]^T$, and $\bm{f}(\cdot,\cdot,\cdot) = [(\bm{f}^1(\cdot,\cdot,\cdot))^T,\ldots, (\bm{f}^M(\cdot,\cdot,\cdot))^T]^T$, respectively. 
The dependency of each agent's individual dynamics $\bm{f}^i(\cdot,\cdot,\cdot)$ on the overall state $\bm{x}_k$ allows couplings between agents.}

We assume that agents exchange their states $\bm{x}_k^i$ with each other, such that each agent $i$ knows the global state $\bm{x}_k$, but the action $\bm{u}_k^i$ needs to be determined independently; 
no additional communication iteration is available for the computation of each action. 
For determining the local actions $u_k^i$, we consider given nominal policies $\bm{\pi}_{\mathrm{nom}}^i:\mathbb{R}^{Md_x}\rightarrow\mathbb{R}^{d_u}$, which are designed to execute a desired task. These policies can be obtained, e.g., by solving a joint optimal control problem for all agents 
using centralized  or distributed RL techniques~\cite{gronauer_multi-agent_2022}.

In addition to the execution of a desired task, safety commonly needs to be ensured by the employed policies. We consider here safety in terms of state constraints $\bm{x}_k\in\mathbb{X}_{\mathrm{safe}}, \forall k\in\mathbb{N}$
for some subset $\mathbb{X}_{\mathrm{safe}}\subset\mathbb{R}^{M d_x}$. Note that this safety notion allows the expression of inter-agent constraints, e.g., to avoid collisions between agents, and individual agents' safety requirements, e.g., to avoid crashes 
with obstacles. 
Due to the potentially unbounded process noise $\bm{\omega}_k$ in \eqref{eq:dynamics_individual}, the safety constraint $\bm{x}_k\in\mathbb{X}_{\mathrm{safe}}, \forall k\in\mathbb{N}$ can generally not be ensured deterministically. Therefore, we resort to the following probabilistic notion of safety.\looseness=-1
\begin{definition}
    A policy $\bm{\pi}$ is $K$-step $\delta$-safe for a state $\bm{x}\in\mathbb{R}^{Md_x}$ if it holds that $P(\bm{x}_{k}\in\mathbb{X}_{\mathrm{safe}},~\forall k=0,\ldots,K|\bm{x}_{0}=\bm{x})\geq 1-\delta$, where states $\bm{x}_{k}$ are defined in \eqref{eq:dynamics_individual}.
\end{definition}

{A common way to ensure the satisfaction of such a safety constraint is through the concept of CBFs. A CBF is a continuous function $h:\mathbb{R}^{Md_x}\rightarrow\mathbb{R}$ whose 0-superlevel set $\mathbb{H}_0$ is contained in the safe set $\mathbb{X}_{\mathrm{safe}}$, i.e., $\mathbb{H}_0=\{\bm{x}\in\mathbb{R}^{Md_x}: h(\bm{x})\geq 0  \}\subset \mathbb{X}_{\mathrm{safe}}.$ If $h(\cdot)$ satisfies
\begin{align}
    \mathbb{E}\left[h(\bm{f}(\bm{x},\bm{\pi}(\bm{x}),\bm{\omega}))\right] &\geq \alpha h(\bm{x})+\epsilon \label{eq:h_cond_noisy}
\end{align}
for all $\bm{x}\in\mathbb{H}_0$, $K$-step $\delta$-safety for $\pi(\cdot)$ can be shown under weak assumptions \cite{cosner_robust_2023}. 
Due to \eqref{eq:h_cond_noisy}, we can immediately define a centralized safety filter through the optimization problem
\begin{subequations}\label{eq:safety_filter_default}
    \begin{align}
    \bm{u}_{\mathrm{safe}}=&\min_{\bm{u}\in\mathbb{U}^M}\|\bm{u}-\bm{\pi}_{\mathrm{nom}}(\bm{x})\|^2\\
    &\text{s.t. } \mathbb{E}[h(\bm{f}(\bm{x},\bm{u},\bm{\omega}))]\geq \alpha h(\bm{x})+\epsilon\label{eq:CBF_constraint}
    \end{align}  
\end{subequations}
where 
$\mathbb{U}\subset\mathbb{R}^{d_u}$ denotes control constraints for a single agent. 
Note that this centralized filter needs a coordinated computation of control inputs and thereby requires additional communication compared to a pure exchange of state information. 
Despite being conceptionally simple, a core challenge of \eqref{eq:safety_filter_default} lies in finding a suitable CBF for a safe set $\mathbb{X}_{\mathrm{safe}}$, which ensures the feasibility of the constraint \eqref{eq:CBF_constraint}. Moreover, this approach assumes exact knowledge of the dynamics $f(\cdot,\cdot,\cdot)$, while we consider the system dynamics to be unknown. Since we cannot expect to ensure safety without any prior knowledge of the system behavior, we make the following assumption.\looseness=-1  
\begin{assumption}\label{ass:prob_model}
A probability distribution $\mathcal{F}$ over possible dynamics $\bm{f}$ is known, i.e., $\bm{f}\sim\mathcal{F}$.
\end{assumption}}
This assumption covers a wide range of probabilistic models that are employed in practice for learning system dynamics. For example, it includes learning methods such as Gaussian process regression \cite{Rasmussen2006} and deep learning-based approaches 
\cite{rothfuss_bridging_2024}. 
Moreover, parametric uncertainty can be expressed in the form required by \cref{ass:prob_model}, which underlines its flexibility.

To address the model uncertainty imposed by \cref{ass:prob_model}, we consider the problem of deriving a risk-sensitive safety filter with a structure analogous to \eqref{eq:safety_filter_default}, which allows a distributed execution on each agent individually. The goal is the design of an algorithm for learning a CBF as the basis for this safety filter given the distribution $\mathcal{F}$ and the safe set $\mathbb{X}_{\mathrm{safe}}$.\looseness=-1

\section{Distributed Safety Filters based on CBFs}\label{sec:safety_filter}

We start our derivations by developing a centralized approach in \cref{subsec:CBF_cent} that allows the formulation of CBFs through value functions, such that RL techniques can be employed for learning safety filters. Based on these CBFs, we first propose an approach for the worst-case evaluation of safety conditions without access to other agents' actions in \cref{subsec:worst_case}. To overcome the lack of feasibility guarantees of this strategy, we develop a filter that ensures safety through the proximity to a known safe policy in \cref{subsec:control_prox}. In \cref{subsec:merging}, we finally propose an approach that guarantees both safety and feasibility by switching between them.\looseness=-1

\subsection{Centralized Learning of Control Barrier Functions}\label{subsec:CBF_cent}

Similar as in our prior work \cite{Curi2022, lederer_risk-sensitive_2023}, we leverage the connection between value and barrier functions to enable the application of RL techniques for determining barrier functions. For this purpose, we consider the value function $V_{\bm{\pi}}(\bm{x})=\mathbb{E}[ \sum_{k=1}^\infty \gamma^k c(\bm{x}_k) ]$
for a cost function $c:\mathbb{R}^{M d_x}\rightarrow \mathbb{R}_{0,+}$ and a policy $\bm{\pi}(\cdot)$, such that $\bm{x}_{k+1}$ is defined via \eqref{eq:dynamics_individual}. Note that this function is well-defined by taking the expectation with respect to both the noise distribution $\rho$ and the distribution of the unknown function $\mathcal{F}$. The sub-level set $\mathbb{V}_{\bm{\pi}}^{\xi}=\{\bm{x}\in\mathbb{R}^{Md_x}: V_{\bm{\pi}}(\bm{x})\leq \xi \}$
of the value function $V_{\bm{\pi}}(\cdot)$ for a threshold $\xi\in\mathbb{R}$  can be used to inner-approximate the safe set $\mathbb{X}_{\mathrm{safe}}$ under weak assumptions as we show in the following lemma.
\begin{lemma}[\cite{Curi2022}]\label{lem:set2cost}
Assume there exists a constant $\hat{c}\in\mathbb{R}_+$, such that the cost $c:\mathbb{R}^{Md_x}\rightarrow\mathbb{R}_{0,+}$ satisfies
\begin{align}\label{eq:c_cond}
    c(\bm{x})\geq \hat{c}\quad \forall\bm{x}\in\overline{\mathbb{X}}_{\mathrm{safe}}=\mathbb{R}^{Md_x}\setminus\mathbb{X}_{\mathrm{safe}}.
\end{align} 
Then, there exists a constant $\xi\in\mathbb{R}_+$, such that the intersection between 
$\mathbb{V}_{\bm{\pi}}^{\xi}$ and $\overline{\mathbb{X}}_{\mathrm{safe}}$ is empty, i.e., $\mathbb{V}_{\bm{\pi}}^{\xi}\cap\overline{\mathbb{X}}_{\mathrm{safe}}=\emptyset$. 
\end{lemma}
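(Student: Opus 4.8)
The plan is to argue by showing the value function stays uniformly bounded away from zero on the unsafe region and then to slot $\xi$ just below that bound. The conclusion $\mathbb{V}_{\bm{\pi}}^{\xi}\cap\overline{\mathbb{X}}_{\mathrm{safe}}=\emptyset$ is equivalent to the statement that every unsafe state $\bm{x}\in\overline{\mathbb{X}}_{\mathrm{safe}}$ satisfies $V_{\bm{\pi}}(\bm{x})>\xi$; hence it suffices to produce a constant $b>0$ with $V_{\bm{\pi}}(\bm{x})\geq b$ for all $\bm{x}\in\overline{\mathbb{X}}_{\mathrm{safe}}$ and then to take any $\xi\in(0,b)$, which automatically lies in $\mathbb{R}_+$ as required.

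To obtain such a $b$, I would exploit the two structural facts about the cost: it is everywhere non-negative, $c:\mathbb{R}^{Md_x}\to\mathbb{R}_{0,+}$, and by \eqref{eq:c_cond} it is at least $\hat{c}$ on the unsafe set. Non-negativity of $c$ makes every summand $\gamma^k c(\bm{x}_k)$ non-negative, so the discounted sum defining $V_{\bm{\pi}}(\bm{x})$ dominates any single one of its terms, even after taking the expectation over the noise $\rho$ and the dynamics $\mathcal{F}$; no integrability issue arises because the integrand is non-negative. For a starting state $\bm{x}\in\overline{\mathbb{X}}_{\mathrm{safe}}$ I would retain only the discounted term associated with the cost evaluated at this unsafe state, bound that cost below by $\hat{c}$ via \eqref{eq:c_cond}, and discard the remaining non-negative terms, yielding a uniform lower bound $V_{\bm{\pi}}(\bm{x})\geq\gamma\hat{c}=:b>0$ over $\overline{\mathbb{X}}_{\mathrm{safe}}$.

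Setting $\xi=\gamma\hat{c}/2$ (or any value in $(0,\gamma\hat{c})$) then finishes the argument: any $\bm{x}\in\mathbb{V}_{\bm{\pi}}^{\xi}$ has $V_{\bm{\pi}}(\bm{x})\leq\xi<\gamma\hat{c}$, so it cannot belong to $\overline{\mathbb{X}}_{\mathrm{safe}}$, giving the claimed empty intersection. The one step that requires care---and which I regard as the main obstacle---is the bookkeeping in the lower bound: since the sum in $V_{\bm{\pi}}$ runs from $k=1$, I must make sure the discounted term I retain genuinely corresponds to the cost incurred at an unsafe state, rather than being ``lost'' to the omitted initial index. Under the convention that the evaluated state enters the summation this is immediate and produces the factor $\gamma$; the remainder of the proof is routine given the non-negativity of $c$.
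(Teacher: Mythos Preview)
The paper does not supply a proof of this lemma; it is imported verbatim from \cite{Curi2022}. Your argument is the standard one and is correct: non-negativity of $c$ lets you discard every term except the one corresponding to the initial unsafe state, giving $V_{\bm{\pi}}(\bm{x})\geq \gamma\hat{c}$ on $\overline{\mathbb{X}}_{\mathrm{safe}}$, after which any $\xi\in(0,\gamma\hat{c})$ forces $\mathbb{V}_{\bm{\pi}}^{\xi}\cap\overline{\mathbb{X}}_{\mathrm{safe}}=\emptyset$.

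The indexing caveat you flag is real and worth stating explicitly rather than deferring to a ``convention.'' Taken literally, the paper writes $V_{\bm{\pi}}(\bm{x})=\mathbb{E}\big[\sum_{k=1}^{\infty}\gamma^{k}c(\bm{x}_k)\big]$ with $\bm{x}_{k+1}$ generated by the dynamics; if one reads $\bm{x}_0=\bm{x}$, the term $c(\bm{x})$ never appears and the bound collapses---an unsafe state whose successor is, almost surely, an absorbing zero-cost safe state would have $V_{\bm{\pi}}(\bm{x})=0$, so no positive $\xi$ could separate the sets. The lemma therefore only holds under the reading you adopt, namely that the state at which $V_{\bm{\pi}}$ is evaluated is the first summand (equivalently, that the sum starts at $k=0$, as in the cited source). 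I would make this explicit in the write-up rather than leave it as a parenthetical, since without it the inequality $V_{\bm{\pi}}(\bm{x})\geq\gamma\hat{c}$ is simply unjustified.
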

Even though \eqref{eq:c_cond} is sufficient for allowing an inner-approximation of $\mathbb{X}_{\mathrm{safe}}$ using a sub-level set $\mathbb{V}_{\bm{\pi}}^{\xi}$, the particular choice of $c(\cdot)$ can have an impact on the approximation quality. Thus, we will focus on continuous extensions of the indicator function, for which it is straightforward to see that they are well-suited for such approximations.

In the sequel, let $\xi$ denote a threshold such that the sub-level set $\mathbb{V}_{\bm{\pi}}^{\xi}$ and $\overline{\mathbb{X}}_{\mathrm{safe}}$ do not intersect, which can be found, e.g., via a robust optimization formulation \cite{lederer_risk-sensitive_2023}. Then, we can define 
\begin{align}\label{eq:barrier}
    h(\bm{x})=\xi-V_{\bm{\pi}}(\bm{x})
\end{align}
as a barrier function, which satisfies $\mathbb{H}_0\subset \mathbb{X}_{\mathrm{safe}}$ by definition. To allow a targeted handling of uncertainty and the tuning of the risk aversion, we do not adapt \eqref{eq:h_cond_noisy}. Instead,
we follow a different strategy and derive a risk-sensitive safety condition for our considered setting with uncertain dynamics $\bm{f}(\cdot,\cdot,\cdot)$. Our approach is based on the exponential risk operator \cite{286253}, which is defined as $\mathbb{R}_{\beta}[C]=\frac{1}{\beta} \log\left(\mathbb{E}\left[ \exp\left( \beta C \right)  \right]\right)$
for a random variable $C$ and risk parameter $\beta\in\mathbb{R}_+$. 
We exploit the design freedom resulting from the risk parameter $\beta$ to derive the following safety condition with tunable risk aversion.\looseness=-1
\begin{proposition}\label{prop:safety}
Consider a cost function $c(\cdot)$ satisfying~\eqref{eq:c_cond}. If there exist constants $\beta, \epsilon\!\in\!\mathbb{R}_+$ and $\alpha\in[0,1]$ such that the barrier function \eqref{eq:barrier} satisfies 
\begin{align}\label{eq:safety_cond}
    -\mathbb{R}_{\beta}[-h(\bm{x}^+)]\geq \alpha h(\bm{x})+\epsilon, \qquad \forall\bm{x}\in\mathbb{V}_{\bm{\pi}}^{\xi}
\end{align}
for 
$\bm{x}^+\!=\!\bm{f}(\bm{x},\bm{\pi}(\bm{x}),\bm{\omega})$, then, $\bm{\pi}(\cdot)$ is $K$-step $\delta$-safe on $\mathbb{V}_{\bm{\pi}}^{\xi}$ with
\begin{align}\label{eq:delta}
    \delta = 1-(1\!-\!\exp(-\beta(\alpha h(\bm{x}_0)\!+\!\epsilon)))(1\!-\!\exp(-\beta\epsilon))^{K-1}.
\end{align}
\end{proposition}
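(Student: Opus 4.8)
The plan is to reduce $K$-step $\delta$-safety to a statement purely about the barrier $h$, and then to build a survival-probability recursion from a single one-step bound. First I would observe that, since $h(\bm{x})=\xi-V_{\bm{\pi}}(\bm{x})$ by \eqref{eq:barrier}, the sublevel set coincides with the zero-superlevel set, $\mathbb{V}_{\bm{\pi}}^{\xi}=\mathbb{H}_0=\{\bm{x}:h(\bm{x})\geq 0\}\subset\mathbb{X}_{\mathrm{safe}}$. Hence the event $\{h(\bm{x}_k)\geq 0,~\forall k=0,\ldots,K\}$ is contained in $\{\bm{x}_k\in\mathbb{X}_{\mathrm{safe}},~\forall k=0,\ldots,K\}$, so it suffices to lower-bound the probability that the trajectory keeps $h$ nonnegative for $K$ steps.

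The key mechanism is to convert \eqref{eq:safety_cond} into a one-step bound on the nonnegative quantity $\exp(-\beta h(\bm{x}^+))$. Unfolding $\mathbb{R}_{\beta}[\cdot]=\tfrac{1}{\beta}\log\mathbb{E}[\exp(\beta\,\cdot)]$ and rearranging \eqref{eq:safety_cond} (multiplying by $-\beta<0$, which flips the inequality, and exponentiating) yields, for every $\bm{x}\in\mathbb{V}_{\bm{\pi}}^{\xi}$,
\[
\mathbb{E}\!\left[\exp(-\beta h(\bm{x}^+))\mid\bm{x}\right]\leq \exp(-\beta(\alpha h(\bm{x})+\epsilon)).
\]
Since $\exp(-\beta h(\bm{x}^+))\geq 0$ and $\{h(\bm{x}^+)<0\}\subseteq\{\exp(-\beta h(\bm{x}^+))\geq 1\}$, Markov's inequality gives the one-step exit bound $P(h(\bm{x}^+)<0\mid\bm{x})\leq\exp(-\beta(\alpha h(\bm{x})+\epsilon))$, and therefore $P(h(\bm{x}^+)\geq 0\mid\bm{x})\geq 1-\exp(-\beta(\alpha h(\bm{x})+\epsilon))$.

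Next I would exploit an asymmetry between the initial transition and the later ones, which is exactly what produces the mixed form of $\delta$ in \eqref{eq:delta}. At $k=0$ the state $\bm{x}_0$ is fixed, so the bound above is used verbatim, yielding the factor $1-\exp(-\beta(\alpha h(\bm{x}_0)+\epsilon))$. For every subsequent step I would only use that the trajectory is still inside $\mathbb{H}_0$: there $\alpha h(\bm{x})\geq 0$ (as $\alpha\geq 0$ and $h\geq 0$), so $\exp(-\beta\alpha h(\bm{x}))\leq 1$ and the exit bound simplifies to the state-independent estimate $P(h(\bm{x}^+)<0\mid\bm{x})\leq\exp(-\beta\epsilon)$, valid uniformly on $\mathbb{H}_0$. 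Writing $E_k=\{h(\bm{x}_j)\geq 0,\ j=0,\ldots,k\}$ and conditioning on the past via the Markov property of \eqref{eq:dynamics_individual}, the tower rule factors $P(E_k)=\mathbb{E}[\mathbf{1}_{E_{k-1}}\,P(h(\bm{x}_k)\geq 0\mid\bm{x}_{k-1})]$; bounding the inner conditional probability on $\{\bm{x}_{k-1}\in\mathbb{H}_0\}$ gives the recursion $P(E_k)\geq(1-\exp(-\beta\epsilon))P(E_{k-1})$ for $k\geq 2$, together with $P(E_1)\geq 1-\exp(-\beta(\alpha h(\bm{x}_0)+\epsilon))$. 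Iterating produces $P(E_K)\geq(1-\exp(-\beta(\alpha h(\bm{x}_0)+\epsilon)))(1-\exp(-\beta\epsilon))^{K-1}=1-\delta$, which combined with the first paragraph establishes the claim.

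The step I expect to demand the most care is this recursion. The one-step estimate holds pointwise for each $\bm{x}\in\mathbb{H}_0$, so turning it into the product form requires correctly invoking the Markov property to collapse the conditioning onto $\bm{x}_{k-1}$ and the tower property to extract each factor, while ensuring the inner bound is only applied on the event that the trajectory has not yet exited $\mathbb{H}_0$ (where \eqref{eq:safety_cond} is in force). Cleanly separating the first transition, which retains the $\alpha h(\bm{x}_0)$ term, from the remaining $K-1$ transitions, which discard it, is precisely what yields the exponent $K-1$ and the exact expression for $\delta$.
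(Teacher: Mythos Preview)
Your proposal is correct and follows essentially the same route as the paper: bounding the indicator of $\{h(\bm{x}^+)<0\}$ by $\exp(-\beta h(\bm{x}^+))$ (you phrase this as Markov's inequality, the paper writes the Chernoff-type bound directly), then applying \eqref{eq:safety_cond} to obtain the one-step exit bounds $\exp(-\beta(\alpha h(\bm{x}_0)+\epsilon))$ and $\exp(-\beta\epsilon)$, and finally chaining over $K$ steps. Your tower-property/Markov recursion is actually stated more carefully than the paper's ``chaining these conditional probabilities,'' but the underlying argument is identical.
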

\begin{proof}
    \cref{lem:set2cost} allows us to bound the probability of leaving $\mathbb{X}_{\mathrm{safe}}$ by the probability of leaving $\mathbb{V}_{\bm{\pi}}^{\xi}$. Thus, it suffices to derive an upper bound for $\mathbb{P}\left( V_{\bm{\pi}}(\bm{x}^+)> \xi\right)$ to show the probability of $1$-step safety. This probability can be expressed as
    \begin{align}\label{eq:safety_pf_1}
    \mathbb{P}\left( V_{\bm{\pi}}(\bm{x}^+)> \xi\right)=\mathbb{E}\left[I_{\xi}(V_{\bm{\pi}}(\bm{x}^+)\right],
    \end{align}
    where the indicator function $I_{\xi}:\mathbb{R}\rightarrow \{0,1\}$ is defined as $I_\xi(V)=0$ if $V\leq \xi$ and $I_\xi(V)=1$ otherwise.
    Note that the expectation in \eqref{eq:safety_pf_1} only affects $V_{\bm{\pi}}(\cdot)$ indirectly as it is a deterministic function evaluated at the random variable $\bm{x}^+$. Furthermore, the exponential function is strictly increasing and positive with $\exp(0)=0$, and $\beta$ is positive. Hence, the indicator function can be bounded by $I_{\xi}(V_{\bm{\pi}}(\bm{x}^+))\leq \exp\left(\beta\left( V_{\bm{\pi}}(\bm{x}^+)-\xi \right)\right).$
    This inequality yields
    $\mathbb{P}\left( V_{\bm{\pi}}(\bm{x}^+)> \xi\right)\leq \mathbb{E}\!\left[ \exp\left( \beta (V_{\bm{\pi}}(\bm{x}^+)-\xi) \right) \right]$
    by taking the expectation of both sides. By employing the definition of the risk operator, 
    the right  side can be simplified to 
    $\mathbb{P}\left( V_{\bm{\pi}}(\bm{x}^+)> \xi\right)\leq\exp\left(\beta  \mathbb{R}_{\beta}[V_{\bm{\pi}}(\bm{x}^+)-\xi]\right)$
    Using \eqref{eq:safety_cond} and $\alpha h(\bm{x})+\epsilon\geq \epsilon$ for $V_{\bm{\pi}}(\bm{x})\leq \xi$, this implies 
    $\mathbb{P}(\bm{x}^+\in\mathbb{V}_{\bm{\pi}}^{\xi}| \bm{x}\in\mathbb{V}_{\bm{\pi}}^{\xi})\geq 1-\exp\left(-\beta \epsilon\right)<1.$
    Moreover, we obtain a more specific bound given exact knowledge of the value function $V_{\bm{\pi}}(\bm{x}_0)$, i.e., $\mathbb{P}(\bm{x}^+\in\mathbb{V}_{\bm{\pi}}^{\xi}|V_{\bm{\pi}}(\bm{x}_0))\geq 1-\exp\left(-\beta (\alpha h(\bm{x}_0)+\epsilon)\right).$
    By chaining these conditional probabilities, we obtain the joint probability for a safety  violation, which implies~\eqref{eq:delta}.\looseness=-1
\end{proof}
This result intuitively extends \eqref{eq:h_cond_noisy} to the setting with uncertain system dynamics $\bm{f}(\cdot,\cdot,\cdot)$: Ignoring the minus signs on the left-hand side of \eqref{eq:safety_cond}, only the expectation in \eqref{eq:h_cond_noisy} is replaced by the risk operator. The similarities extend to an exponential decay of the guaranteed probability of safety in \eqref{eq:delta}, which can also be found in \cite{cosner_robust_2023}. While the risk-neutral formulation \eqref{eq:h_cond_noisy} can only increase this probability through an increase of the overall robustness via $\alpha$ and $\epsilon$, our risk-sensitive approach \eqref{eq:safety_cond} allows us to specifically target high uncertainty in the model via the risk parameter $\beta$.

Due to the definition of $h(\cdot)$ based on a value function $V_{\bm{\pi}}(\cdot)$, we do not necessarily need to manually specify a policy $\bm{\pi}(\cdot)$. Instead, we can formulate an optimization problem $\bm{\pi}_{\mathrm{safe}}=\min_{\pi}\mathbb{E}[V_{\bm{\pi}}(\bm{x})]$
to jointly determine $\bm{\pi}(\cdot)$ and $V_{\bm{\pi}}(\cdot)$. Under certain assumptions on the combination of cost $c(\cdot)$ and distribution of the system dynamics $\mathcal{F}$, the value function $V_{\bm{\pi}_{\mathrm{safe}}}(\cdot)$ can be shown to satisfy the conditions of \cref{prop:safety} \cite{lederer_risk-sensitive_2023}. Moreover, $\min_{\pi}\mathbb{E}[V_{\bm{\pi}}(\bm{x})]$
can be efficiently solved using off-the-shelf RL approaches in practice, such that barrier functions can be straightforwardly learned.\looseness=-1

\subsection{Distributed Safety Filters via Worst-Case Anticipation}\label{subsec:worst_case}

While we can straightforwardly obtain a barrier function using the approach presented in \cref{subsec:CBF_cent},
its direct usage in a  safety filter of the form \eqref{eq:safety_filter_default} requires knowledge of other agents' control inputs, which we assume unavailable. Thus, we cannot evaluate the safety condition \eqref{eq:safety_cond} in a distributed setting.
To overcome this limitation, we 
under-approximate the left side of \eqref{eq:safety_cond} so that its evaluation only requires information about a single agent's action. We achieve this by taking the minimum with respect to other agents' actions leading to the safety filter
\begin{subequations}\label{eq:safety_filter_pess}
    \begin{align}
    \bm{u}_{\mathrm{pess}}^i=&\min_{\bm{u}^i\in\mathbb{U}}\|\bm{u}^i-\bm{\pi}_{\mathrm{nom}}^i(\bm{x})\|^2\\
    &\text{s.t. } \min_{\mathclap{\substack{\bm{u}^j\in\mathbb{U}, j\neq i}}} ~ -\mathbb{R}_{\beta}[-h(\bm{x}^+_u)]\geq \alpha h(\bm{x})+\epsilon \label{eq:safety_const_pess}
    \end{align}
\end{subequations}
where $\bm{x}^+_u=\bm{f}(\bm{x},\bm{u},\bm{\omega})$.
Due to the minimum in \eqref{eq:safety_const_pess}, the safety filter pessimistically anticipates the worst actions of other agents making it robust against their particular choice. While this approach might be considered overly conservative, it should be noted that we ideally want safety filters to have no affect on the nominal policy for most of the safe subset $\mathbb{X}_{\mathrm{safe}}$. Hence, we expect these worst case actions to relevantly tighten the original constraint \eqref{eq:safety_cond} only in the proximity to unsafe states. 
The benefit of this pessimistic approach for a distributed safety filter is crucial: feasibility of \eqref{eq:safety_const_pess} for one agent immediately
ensures the satisfaction of \eqref{eq:safety_cond}  for the entire multi-agent.
\begin{proposition}\label{prop:safety_pess}
    Consider a MAS with unknown dynamics $\bm{f}(\cdot,\cdot,\cdot)$ satisfying \cref{ass:prob_model}. Moreover, assume that $\mathbb{V}_{\bm{\pi}_{\mathrm{safe}}}^{\xi}$ is included in $\mathbb{X}_{\mathrm{safe}}$. If there exist constants $\beta, \epsilon\in\mathbb{R}_+$ and $\alpha\in[0,1]$ such that the constraint \eqref{eq:safety_const_pess} is feasible at a state $\bm{x}\in\mathbb{V}_{\bm{\pi}_{\mathrm{safe}}}^{\xi}$ for some $i=1,\ldots,M$, any $\bm{u}_{\mathrm{pess},i}=[(\bm{u}^1)^T, \ldots, (\bm{u}_{\mathrm{pess}}^i)^T, \ldots, (\bm{u}^M)^T]$ defined via \eqref{eq:safety_filter_pess} guarantees the satisfaction of \eqref{eq:safety_cond} at~$\bm{x}$.
\end{proposition}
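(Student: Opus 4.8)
The plan is to leverage the single observation that the inner minimization over the other agents' actions in the distributed constraint \eqref{eq:safety_const_pess} furnishes a uniform lower bound that remains valid no matter which admissible actions those agents ultimately select. First I would fix the state $\bm{x}\in\mathbb{V}_{\bm{\pi}_{\mathrm{safe}}}^{\xi}$ and the index $i$ for which \eqref{eq:safety_const_pess} is feasible, and let $\bm{u}_{\mathrm{pess}}^i$ denote the minimizer of \eqref{eq:safety_filter_pess}. Feasibility means precisely that, with the $i$-th component held fixed at $\bm{u}_{\mathrm{pess}}^i$, the quantity $-\mathbb{R}_{\beta}[-h(\bm{x}^+_u)]$ minimized over all $\bm{u}^j\in\mathbb{U}$, $j\neq i$, is at least $\alpha h(\bm{x})+\epsilon$.

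The key step is to instantiate this minimum at the actual---but arbitrary---actions $\bm{u}^j\in\mathbb{U}$, $j\neq i$, chosen by the remaining agents. The concatenated profile $\bm{u}_{\mathrm{pess},i}=[(\bm{u}^1)^T,\ldots,(\bm{u}_{\mathrm{pess}}^i)^T,\ldots,(\bm{u}^M)^T]$ is then one feasible point of the inner minimization, so evaluating the objective there yields a value no smaller than the minimum, giving
\begin{align*}
-\mathbb{R}_{\beta}[-h(\bm{f}(\bm{x},\bm{u}_{\mathrm{pess},i},\bm{\omega}))] &\geq \min_{\bm{u}^j\in\mathbb{U},\,j\neq i}-\mathbb{R}_{\beta}[-h(\bm{x}^+_u)] \\
&\geq \alpha h(\bm{x})+\epsilon.
\end{align*}
Identifying the overall policy output $\bm{\pi}(\bm{x})=\bm{u}_{\mathrm{pess},i}$ so that $\bm{x}^+=\bm{f}(\bm{x},\bm{u}_{\mathrm{pess},i},\bm{\omega})$, the left-hand side above is exactly the one appearing in the centralized condition \eqref{eq:safety_cond}, which is thereby established at $\bm{x}$.

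The main obstacle is not analytical but notational: one must carefully track the quantifier structure, since the inner minimum in \eqref{eq:safety_const_pess} is taken with agent $i$'s action frozen at $\bm{u}_{\mathrm{pess}}^i$, whereas the claim concerns an arbitrary, non-minimizing choice of the other agents' actions. The crux is recognizing that a worst-case bound over the others decouples the single-agent feasibility check from the unknown actions of the remaining agents. No property of the risk operator $\mathbb{R}_{\beta}[\cdot]$ is needed beyond its being a fixed deterministic functional of the action profile, as the expectation it contains is taken only over $\rho$ and $\mathcal{F}$ and never interacts with the minimization over actions; the hypotheses $\mathbb{V}_{\bm{\pi}_{\mathrm{safe}}}^{\xi}\subset\mathbb{X}_{\mathrm{safe}}$ and \cref{ass:prob_model} merely frame the setting in which this one-step condition will later be propagated to genuine safety via \cref{prop:safety}.
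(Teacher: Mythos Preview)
Your argument is correct and mirrors the paper's own proof: both fix agent $i$'s action at $\bm{u}_{\mathrm{pess}}^i$, observe that any particular choice of the remaining agents' actions is a feasible point of the inner minimization in \eqref{eq:safety_const_pess}, and chain the resulting inequality with feasibility to recover \eqref{eq:safety_cond}. Your commentary on the quantifier structure and on which hypotheses are actually used is accurate and adds nothing beyond what the paper implicitly assumes.
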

\begin{proof}
    We start the proof by considering a single agent~$i$ whose action is determined by \eqref{eq:safety_filter_pess}, while the actions of all other agents are set to arbitrary values $\tilde{\bm{u}}^j$, $j\neq i$. If \eqref{eq:safety_const_pess} is feasible at $\bm{x}\in\mathbb{V}_{\bm{\pi}_{\mathrm{safe}}}^{\xi}$, we can define
    \begin{align*}
        \!\bm{x}^+_{\tilde{u}_{\mathrm{pess}},i}&=\bm{f}\big(\bm{x},\big[(\tilde{\bm{u}}^1)^T,\ldots,(\bm{u}_{\mathrm{pess}}^i)^T,\ldots, (\tilde{\bm{u}}^M)^T\big]^T, \bm{\omega}\big),\!\\
        \!\bm{x}^+_{u_{\mathrm{pess},i}}&=\bm{f}\big(\bm{x},\big[(\bm{u}^1)^T,\ldots,(\bm{u}_{\mathrm{pess}}^i)^T,\ldots, (\bm{u}^M)^T\big]^T, \bm{\omega}\big).
    \end{align*}
    Then, it is straightforward to see that 
    \begin{align*}
        -\mathbb{R}_{\beta}[-h(\bm{x}^+_{ u_{\mathrm{pess}},i})] \geq \min_{\mathclap{\substack{\tilde{\bm{u}}^j\in\mathbb{R}^{d_u},\\
          \forall j=1,\ldots,M, j\neq i}}} ~ -\!\mathbb{R}_{\beta}[-h(\bm{x}^+_{\tilde{u}_{\mathrm{pess}},i})] \geq  \alpha h(\bm{x})\!+\!\epsilon
    \end{align*}
    holds due to the definition of $\bm{u}_{\mathrm{pess}}^i$ in \eqref{eq:safety_filter_pess}. Since this inequality holds for arbitrary $\bm{u}^j$, safety of the MAS is guaranteed irrespective of the other agents' chosen control inputs, i.e., for arbitrary $\bm{u}^j$, $j\neq i$.
\end{proof}
This result allows the straightforward extension of centralized safety-filters similar to \eqref{eq:safety_filter_default} to the distributed control setting without any changes to the construction of the CBF. This comes at the price that during the online evaluation of the filter \eqref{eq:safety_filter_pess}, feasibility from its centralized counterpart is not inherited.\looseness=-1 

\subsection{Distributed Safety Filters based on Control Proximity}\label{subsec:control_prox}

Since the worst-case safety filter \eqref{eq:safety_filter_pess} is not guaranteed to yield a control input for all and potentially not for any agent $i$, it remains to determine a back-up strategy that can be employed in case of infeasibility of \eqref{eq:safety_filter_pess}.
Under the assumption that the dynamics and the barrier function $h(\cdot)$ are sufficiently smooth, one reasonable strategy clearly relies in staying close to a safe policy $\bm{\pi}_{\mathrm{safe}}(\cdot)$, such that its safety guarantee can be inherited. \looseness=-1

In more detail, we 
assume that $\bm{\pi}_{\mathrm{safe}}(\cdot)$ satisfies the safety condition \eqref{eq:safety_cond} with some $\bar{\alpha}\in[0,1]$, $\bar{\epsilon}\in\mathbb{R}_{0,+}$. Then, we can choose values $\alpha<\bar{\alpha}$ and $\epsilon<\bar{\epsilon}$, for which we want to ensure the safety condition \eqref{eq:safety_cond} 
using our safety filter. We quantify the difference between both parameter choices via the safety margin\looseness=-1
\begin{align}\label{eq:Delta_fun}
    \Delta(\bm{x}) = (\bar{\alpha}-\alpha) h(\bm{x}) + \bar{\epsilon} -\epsilon.
\end{align}
We equally split this margin between the agents, which can be achieved using a safety filter
\begin{subequations}\label{eq:safety_filter_Lipschitz}
    \begin{align}
    \bm{u}_{\mathrm{prox}}^i=&\min_{\bm{u}^i}\|\bm{u}^i-\bm{\pi}_{\mathrm{nom}}^i(\bm{x})\|^2\\
    &\text{s.t. } \|\bm{u}^i-\bm{\pi}_{\mathrm{safe}}^i(\bm{x})\|\leq \frac{\Delta(\bm{x})}{ML_h L_{f_u^i}}\label{eq:safety_const_Lipschitz}
    \end{align}
\end{subequations}
if the dynamics are $L_{f_u^i}$-Lipschitz with respect to their second argument and the barrier function $h(\cdot)$ is $L_h$-Lipschitz. Since this safety filter ensures the proximity of the applied control inputs to $\bm{\pi}_{\mathrm{safe}}(\cdot)$, it is guaranteed to be feasible and inherits the safety guarantees as shown in the following proposition.

\begin{proposition}\label{prop:safety_Lip}
    Consider a MAS with unknown dynamics $\bm{f}(\cdot,\cdot,\cdot)$ satisfying \cref{ass:prob_model}, which has components $\bm{f}^i(\cdot,\cdot,\cdot)$ that admit Lipschitz constants $L_{f_u^i}\in\mathbb{R}_+$. Moreover, assume that $\mathbb{V}_{\bm{\pi}_{\mathrm{safe}}}^{\xi}$ is included in $\mathbb{X}_{\mathrm{safe}}$ and $h(\bm{x})=\xi-V_{\bm{\pi}_\mathrm{safe}}(\bm{x})$ is $L_h$-Lipschitz. If $\bm{\pi}_{\mathrm{safe}}(\cdot)$ satisfies \eqref{eq:safety_cond} with $\bar{\alpha}$, $\bar{\epsilon}$, and $\alpha<\bar{\alpha}$, $\epsilon<\bar{\epsilon}$ holds,
    then \eqref{eq:safety_const_Lipschitz} is feasible for all $i=1,\ldots, M$, and $\bm{u}_{\mathrm{prox}}=[(\bm{u}_{\mathrm{prox}}^1)^T, \ldots, (\bm{u}_{\mathrm{prox}}^M)^T]$ defined via \eqref{eq:safety_filter_Lipschitz} guarantees the satisfaction of \eqref{eq:safety_cond}.
\end{proposition}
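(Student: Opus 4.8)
The plan is to prove feasibility first and then safety, the latter by transferring a deterministic, per-realization bound on the change of $h$ through the risk operator. For feasibility, I would note that on $\mathbb{V}_{\bm{\pi}_{\mathrm{safe}}}^{\xi}$ the value function satisfies $V_{\bm{\pi}_{\mathrm{safe}}}(\bm{x})\leq\xi$, so $h(\bm{x})=\xi-V_{\bm{\pi}_{\mathrm{safe}}}(\bm{x})\geq 0$; combined with $\bar{\alpha}>\alpha$ and $\bar{\epsilon}>\epsilon$ this gives $\Delta(\bm{x})=(\bar{\alpha}-\alpha)h(\bm{x})+\bar{\epsilon}-\epsilon\geq\bar{\epsilon}-\epsilon>0$. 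Hence the right-hand side of the constraint \eqref{eq:safety_const_Lipschitz} is strictly positive, the feasible set is a nonempty ball, and $\bm{u}^i=\bm{\pi}_{\mathrm{safe}}^i(\bm{x})$ is always admissible, establishing feasibility for every $i=1,\ldots,M$.

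For safety, the key step is a uniform estimate valid for each realization of the noise $\bm{\omega}$ and each sampled dynamics $\bm{f}\sim\mathcal{F}$. Writing $\bm{x}^+_{\mathrm{prox}}=\bm{f}(\bm{x},\bm{u}_{\mathrm{prox}},\bm{\omega})$ and $\bm{x}^+_{\mathrm{safe}}=\bm{f}(\bm{x},\bm{\pi}_{\mathrm{safe}}(\bm{x}),\bm{\omega})$, I would first use the $L_h$-Lipschitz property of $h$ to bound $|h(\bm{x}^+_{\mathrm{prox}})-h(\bm{x}^+_{\mathrm{safe}})|\leq L_h\|\bm{x}^+_{\mathrm{prox}}-\bm{x}^+_{\mathrm{safe}}\|$. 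Since the $i$-th block $\bm{f}^i$ depends on the controls only through $\bm{u}^i$ and is $L_{f_u^i}$-Lipschitz in that argument, each block of the displacement is bounded by $L_{f_u^i}\|\bm{u}_{\mathrm{prox}}^i-\bm{\pi}_{\mathrm{safe}}^i(\bm{x})\|\leq\Delta(\bm{x})/(ML_h)$ using the constraint \eqref{eq:safety_const_Lipschitz}. Aggregating over the $M$ blocks of the displacement $\bm{v}=\bm{x}^+_{\mathrm{prox}}-\bm{x}^+_{\mathrm{safe}}$ via $\|\bm{v}\|=\sqrt{\sum_{i=1}^M\|\bm{v}^i\|^2}\leq\sum_{i=1}^M\|\bm{v}^i\|$ makes the factor $M$ cancel, yielding the clean pointwise bound $|h(\bm{x}^+_{\mathrm{prox}})-h(\bm{x}^+_{\mathrm{safe}})|\leq\Delta(\bm{x})$, where the equal splitting of the margin across agents is exactly what produces the cancellation.

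It remains to lift this deterministic inequality through the risk operator. Rewriting the bound as $-h(\bm{x}^+_{\mathrm{prox}})\leq-h(\bm{x}^+_{\mathrm{safe}})+\Delta(\bm{x})$, which holds for every realization, and using that $\mathbb{R}_{\beta}[\cdot]$ is monotone and invariant under addition of the deterministic quantity $\Delta(\bm{x})$, I obtain $\mathbb{R}_{\beta}[-h(\bm{x}^+_{\mathrm{prox}})]\leq\mathbb{R}_{\beta}[-h(\bm{x}^+_{\mathrm{safe}})]+\Delta(\bm{x})$. Multiplying by $-1$, invoking the assumed condition $-\mathbb{R}_{\beta}[-h(\bm{x}^+_{\mathrm{safe}})]\geq\bar{\alpha}h(\bm{x})+\bar{\epsilon}$ for $\bm{\pi}_{\mathrm{safe}}(\cdot)$, and substituting the definition \eqref{eq:Delta_fun} of $\Delta(\bm{x})$ collapses the right-hand side to $\alpha h(\bm{x})+\epsilon$, which is precisely \eqref{eq:safety_cond}.

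The main obstacle is establishing and correctly applying the two structural properties of the exponential risk operator, namely monotonicity and invariance under addition of a deterministic constant, which are what allow the per-realization Lipschitz bound to pass through $\mathbb{R}_{\beta}$; both follow directly from $\mathbb{R}_{\beta}[C]=\tfrac{1}{\beta}\log\mathbb{E}[\exp(\beta C)]$, since the logarithm and exponential are increasing and $\mathbb{E}[\exp(\beta(C+\kappa))]=\exp(\beta\kappa)\,\mathbb{E}[\exp(\beta C)]$ for constant $\kappa$. The only other point requiring care is the bookkeeping in the aggregation over agents, ensuring the factor $M$ in the constraint denominator exactly offsets the $M$ block contributions so that the bound tightens to $\Delta(\bm{x})$ rather than a multiple of it.
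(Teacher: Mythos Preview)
Your proposal is correct and follows essentially the same route as the paper: a per-realization Lipschitz bound on $|h(\bm{x}^+_{\mathrm{prox}})-h(\bm{x}^+_{\mathrm{safe}})|$ aggregated over the $M$ agent blocks to recover $\Delta(\bm{x})$, then pushed through $\mathbb{R}_\beta$ via monotonicity and translation invariance, followed by substitution of the safe-policy condition with $\bar{\alpha},\bar{\epsilon}$. You are in fact more explicit than the paper, which does not spell out the feasibility argument or the two properties of the exponential risk operator; the paper also writes the Lipschitz bound directly as $\sum_{i=1}^M L_h L_{f_u^i}\|\bm{\pi}_{\mathrm{safe}}^i(\bm{x})-\bm{u}^i\|$ without the intermediate $\|\bm{v}\|\le\sum_i\|\bm{v}^i\|$ step, but this is only a cosmetic difference in bookkeeping.
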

\begin{proof}
    Lipschitz continuity of $h(\cdot)$ and $\bm{f}(\cdot,\cdot,\cdot)$ with respect to the second argument guarantees that 
    $|h(\bm{f}(\bm{x},\bm{\pi}_{\mathrm{safe}}(\bm{x}),\bm{\omega})) - h(\bm{f}(\bm{x},\bm{u},\bm{\omega}))| \leq
        \sum\nolimits_{i=1}^M L_h L_{f_u^i} \|\bm{\pi}_{\mathrm{safe}}^i(\bm{x})-\bm{u}^i\|$.
    Therefore, we can bound the risk of the value function 
    by
    $\mathbb{R}_{\beta}\left[-h(\bm{x}_u^+) \right]\geq \mathbb{R}_{\beta}\left[-h(\bm{x}_{\bm{u}'}^+)\right]-\!\sum\nolimits_{i=1}^M \! L_h L_{f_u^i} \|\bm{\pi}_{\mathrm{safe}}^i(\bm{x})\!-\!\bm{u}^i\|$
    with $\bm{u}'=\bm{\pi}_{\mathrm{safe}}(\bm{x}).$
    Since the safe back-up policy $\bm{\pi}_{\mathrm{safe}}(\cdot)$ ensures that \eqref{eq:safety_cond} is satisfied with $\bar{\alpha}, \bar{\epsilon}$, we have
    $\mathbb{R}_{\beta}\left[-h(\bm{x}_u^+) \right]\geq \bar{\alpha} h(\bm{x})+\bar{\epsilon}-\sum\nolimits_{i=1}^M L_h L_{f_u^i} \|\bm{\pi}_{\mathrm{safe}}^i(\bm{x})-\bm{u}^i\|.$
    Moreover, the constraint \eqref{eq:safety_const_Lipschitz} guarantees that $\sum_{i=1}^M L_h L_{f_u^i} \|\bm{\pi}_{\mathrm{safe}}^i(\bm{x})-\bm{u}^i\| \leq \Delta(\bm{x})$
    for $\Delta(\cdot)$ defined in \eqref{eq:Delta_fun}, which immediately implies \eqref{eq:safety_cond}.\looseness=-1
\end{proof}
In comparison to \cref{prop:safety_pess}, this result requires additional assumptions, but also provides stronger feasibility guarantees. However, the additionally required Lipschitz continuity is generally not very restricitve. 
Probabilistic estimates can be directly obtained from the probability distribution $\mathcal{F}$ over the dynamics, e.g., for Gaussian process models as shown in \cite{Lederer2019}. Lipschitz continuity of the barrier function $h(\cdot)$ follows immediately from a sufficient smoothness of $c(\cdot)$, $\bm{f}(\cdot,\cdot,\cdot)$ and $\bm{\pi}_{\mathrm{safe}}(\cdot)$, such that this assumption is also not very restrictive in general. 
Note that even in the absence of any knowledge of these Lipschitz constants, the result straightforwardly extends to the case $L_h,L_{f_u^i}\rightarrow \infty$, such that $\bm{u}_{\mathrm{prox}}^i=\bm{\pi}_{\mathrm{safe}}^i(\bm{x})$. Since the proximity constraint \eqref{eq:safety_const_Lipschitz} is always feasible,
\eqref{eq:safety_filter_Lipschitz} is also applicable when \eqref{eq:safety_filter_pess} becomes infeasible. However, this advantage comes at a price: All agents need to apply $\bm{u}^i_{\mathrm{prox}}$ for the safety condition \eqref{eq:safety_cond} to be satisfied. Moreover, the deviation of $\bm{u}^i_{\mathrm{prox}}$ from $\bm{\pi}_{\mathrm{safe}}^i(\bm{x})$ is bounded in terms of the Euclidean distance. This can potentially cause restrictions on the admissible control inputs in \eqref{eq:safety_filter_Lipschitz} even when far away from the boundary of the safe set $\mathbb{X}_{\mathrm{safe}}$. Therefore, its application can be undesirable in these situations.

\subsection{Feasibility-based Safety-Filter Switching}\label{subsec:merging}

Since the worst-case formulation \eqref{eq:safety_filter_pess} and the proximity-based approach \eqref{eq:safety_filter_Lipschitz} have complimentary strengths, we combine them to obtain a feasible safety filter, which does not exhibit excessive restrictiveness of the admissible control inputs. We achieve this by exploiting the property that the feasibility of \eqref{eq:safety_filter_pess} on one agent implies safety for the entire MAS. Thus, we define the safety-filtered policy of each agent as\looseness=-1
\begin{align}\label{eq:overall_filter}
    \bm{\pi}^i_{\mathrm{filt}}(\bm{x})=\begin{cases}
        \bm{u}_{\mathrm{pess}}^i&\text{if \eqref{eq:safety_const_pess} feasible}\\
        \bm{u}_{\mathrm{prox}}^i&\text{else, }
    \end{cases}
\end{align}
such that each agent can compute this safety filter and check feasibility of \eqref{eq:safety_const_pess} without coordination.
The computation of \eqref{eq:safety_filter_Lipschitz} in addition to the attempt of solving \eqref{eq:safety_filter_pess} only causes negligible computational overhead as \eqref{eq:safety_filter_Lipschitz} can be reformulated as an efficient second-order cone program.
As all agents independently finding \eqref{eq:safety_const_pess} infeasible implies $\bm{\pi}^i_{\mathrm{filt}}(\bm{x})=\bm{u}_{\mathrm{prox}}^i$ for all $i=1,\ldots,M$, safety can be ensured using \cref{prop:safety_Lip} when \cref{prop:safety_pess} is not applicable. 
Thereby, the safety and feasibility guarantees from \eqref{eq:safety_filter_pess} and  \eqref{eq:safety_filter_Lipschitz} are maintained.

\begin{theorem}
     Consider a MAS with unknown dynamics $\bm{f}(\cdot,\cdot,\cdot)$ satisfying \cref{ass:prob_model}, which has components $\bm{f}^i(\cdot,\cdot,\cdot)$ that admit Lipschitz constants $L_{f_u^i}\in\mathbb{R}_+$. Moreover, assume that $\mathbb{V}_{\bm{\pi}_{\mathrm{safe}}}^{\xi}$ is included in $\mathbb{X}_{\mathrm{safe}}$ and $h(\bm{x})=\xi-V_{\bm{\pi}_\mathrm{safe}}(\bm{x})$ is $L_h$-Lipschitz. If $\bm{\pi}_{\mathrm{safe}}(\cdot)$ satisfies \eqref{eq:safety_cond} with $\bar{\alpha}$, $\bar{\epsilon}$ for all $\bm{x}\in\mathbb{V}_{\bm{\pi}_{\mathrm{safe}}}^{\xi}$, and $\alpha<\bar{\alpha}$, $\epsilon<\bar{\epsilon}$ holds,
    then $\bm{\pi}^i_{\mathrm{filt}}(\cdot)$ is well-defined for all $\bm{x}\in\mathbb{V}_{\bm{\pi}_{\mathrm{safe}}}^{\xi}$ and guarantees $K$-step $\delta$-safety with $\delta$ defined in \eqref{eq:delta}.
\end{theorem}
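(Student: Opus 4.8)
The plan is to reduce the theorem to \cref{prop:safety_pess} and \cref{prop:safety_Lip} by a case analysis on whether the pessimistic constraint \eqref{eq:safety_const_pess} is feasible for any agent, and then to invoke \cref{prop:safety} to convert the pointwise satisfaction of \eqref{eq:safety_cond} into the claimed $K$-step $\delta$-safety.

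First I would dispatch well-definedness. For every $\bm{x}\in\mathbb{V}_{\bm{\pi}_{\mathrm{safe}}}^{\xi}$, either some agent's pessimistic program \eqref{eq:safety_filter_pess} is feasible---in which case that agent's branch of \eqref{eq:overall_filter} returns a minimizer $\bm{u}_{\mathrm{pess}}^i$---or it is not, in which case the agent falls back to $\bm{u}_{\mathrm{prox}}^i$. By \cref{prop:safety_Lip} the proximity constraint \eqref{eq:safety_const_Lipschitz} is feasible for every $i$ under the stated Lipschitz hypotheses and the parameter ordering $\alpha<\bar{\alpha}$, $\epsilon<\bar{\epsilon}$, so $\bm{u}_{\mathrm{prox}}^i$ always exists. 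Hence both branches of \eqref{eq:overall_filter} are always populated and $\bm{\pi}^i_{\mathrm{filt}}(\cdot)$ is well-defined on $\mathbb{V}_{\bm{\pi}_{\mathrm{safe}}}^{\xi}$.

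Next I would verify that \eqref{eq:safety_cond} holds at an arbitrary $\bm{x}\in\mathbb{V}_{\bm{\pi}_{\mathrm{safe}}}^{\xi}$ under the filtered policy. In the first case there exists at least one agent $i$ for which \eqref{eq:safety_const_pess} is feasible, so that agent applies $\bm{u}_{\mathrm{pess}}^i$ while the remaining agents apply whatever \eqref{eq:overall_filter} dictates (possibly a mix of pessimistic and proximity inputs). The crucial observation is that \cref{prop:safety_pess} guarantees \eqref{eq:safety_cond} irrespective of the other agents' control inputs, so these remaining choices are immaterial and safety follows directly. In the complementary case no agent finds \eqref{eq:safety_const_pess} feasible, so \eqref{eq:overall_filter} assigns $\bm{u}_{\mathrm{prox}}^i$ to every agent; this is precisely the setting of \cref{prop:safety_Lip}, which then yields \eqref{eq:safety_cond}. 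The two cases are exhaustive, so \eqref{eq:safety_cond} holds at every $\bm{x}\in\mathbb{V}_{\bm{\pi}_{\mathrm{safe}}}^{\xi}$, and I would close by invoking \cref{prop:safety} with $\bm{\pi}_{\mathrm{filt}}(\cdot)$ to obtain $K$-step $\delta$-safety with $\delta$ as in \eqref{eq:delta}.

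I expect the only genuinely subtle point---rather than a technical obstacle---to be the handling of the mixed scenario in the first case: one must recognise that \cref{prop:safety_pess} is robust to arbitrary actions of the non-pessimistic agents, so that a single feasible pessimistic agent certifies safety for the whole MAS even when the others are simultaneously switching. This asymmetry between the two propositions (one agent suffices for the pessimistic guarantee, whereas all agents must comply for the proximity guarantee) is exactly what makes the per-agent switching rule \eqref{eq:overall_filter} self-consistent, and I would organise the case split so that no intermediate configuration escapes one of the two propositions.
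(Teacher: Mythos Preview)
Your proposal is correct and follows essentially the same approach as the paper: establish well-definedness via the guaranteed feasibility of \eqref{eq:safety_filter_Lipschitz} from \cref{prop:safety_Lip}, split into the two cases (some agent finds \eqref{eq:safety_const_pess} feasible versus none do) and invoke \cref{prop:safety_pess} or \cref{prop:safety_Lip} respectively, then conclude via \cref{prop:safety}. Your explicit treatment of the mixed scenario---where one agent is pessimistic while others may be using proximity inputs---is a welcome clarification that the paper leaves implicit in its appeal to \cref{prop:safety_pess}.
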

\begin{proof}
    As $\bm{u}_{\mathrm{prox}}^i$ exists for all $\bm{x}\in\mathbb{V}_{\bm{\pi}_{\mathrm{safe}}}^{\xi}$ due to \cref{prop:safety_Lip}, $\bm{\pi}^i_{\mathrm{filt}}(\bm{x})$ is well-defined for all $\bm{x}\in\mathbb{V}_{\bm{\pi}_{\mathrm{safe}}}^{\xi}$. 
    Moreover, it follows from the proof of \cref{prop:safety_pess} that feasibility of \eqref{eq:safety_const_pess} for a single agent $i$ guarantees the satisfaction of \eqref{eq:safety_cond}. If all agents apply $\bm{u}_{\mathrm{prox}}^i$, the satisfaction of \eqref{eq:safety_cond} follows from \cref{prop:safety_Lip}. 
    Therefore, \eqref{eq:safety_cond} is satisfied for all $\bm{x}\in\mathbb{V}_{\bm{\pi}_{\mathrm{safe}}}^{\xi}$, such that \cref{prop:safety} guarantees $K$-step $\delta$-safety.
\end{proof}

\section{Numerical Evaluation}\label{sec:eval}

We evaluate the proposed distributed risk-sensitive safety filter in two examples. We first consider agents that are coupled in their dynamics as an abstraction for mobile robots jointly transporting an object in \cref{subsec:eval_dynCouple}. In \cref{subsec:eval_constCouple}, we investigate the behavior of the safety filter with growing number of agents that need to avoid collisions which each other. \looseness=-1

We use rewards 
$r(\bm{x}_k,\bm{u}_k)=\exp(-\|\bm{u}_k\|^2_{\bm{W}_u}-\sum_{i=1}^3 \|\bm{x}_{k}^i-\bm{x}_{\mathrm{ref}}\|^2_{\bm{W}_x^i})$ 
to specify the nominal policy, representing regulation to a reference state $\bm{x}_{\mathrm{ref}}$ as a prototypical task. Nominal and safe policies are computed using PPO with learning rate $1e\!-\!6$ and a discount factor $\gamma=0.99$ \cite{schulman_proximal_2017}. The obtained safe policy is used to generate $2e4$ training samples of the value function via roll-outs with $500$ time steps. 
Using these samples, the value function is learned using a neural network with $3$ hidden layers and $[256, 256, 64]$ neurons over 
$400$ episodes.
Other safety filter parameters are treated as design choices and we manually tune them by increasing their values starting from the theoretically required minimum until an acceptable performance is obtained. 
This yields $\alpha\!=\!0.1$, $\epsilon\!=\!0$, and $\nicefrac{\Delta(\bm{x})}{ML_hL_{f_u^i}}\!=\!0.05$. Note that we consider $\nicefrac{\Delta(\bm{x})}{ML_hL_{f_u^i}}$ as a single parameter. 
The expectations in the safety filters are approximated through empirical means with $5$ samples.

\begin{figure}
    \centering
    \includegraphics[]{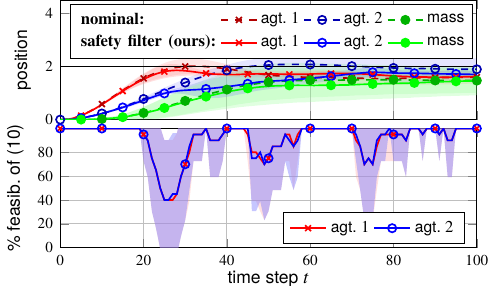}
    \caption{Top: Average trajectories for the nominal control law (dashed lines) overshoot and exhibit significant constraint violations, while our safety filter (full lines) slows down the increase to maintain safety. Shaded areas denote one standard deviation intervals. Bottom: The worst-case safety filter becomes mainly infeasible when one agent comes close to the constraint boundary, but has a high feasibility rate in general. This illustrates that switching in \eqref{eq:overall_filter} is limited to few situations.\looseness=-1 }
    \label{fig:sample_trajs}
\end{figure}

\subsection{Dynamics Coupling between Agents}
\label{subsec:eval_dynCouple}

We start with the problem of controlling two agents connected to a joint mass via a spring. Considering the mass as an unactuated third agent, we model this coupled system via $x_{k+1,1}^i=x_{k,1}^i+0.1x_{k,2}^i+\omega_{k,1}^i$, $x_{k+1,2}^i=x_{k,2}^i+0.1g^i(\bm{x}_k^1,\bm{u}_k) - 0.1\sin(\psi(x_{k,2}^i))+\omega_{k,2}^i$
where $\psi:\mathbb{R}\rightarrow [-1,1]$ is a saturating linear function,  $g^i(\bm{x}_k,\bm{u}_k)=5u^i-0.5\theta^2 e_k^i$ for $i=1,2$ and $g^3(\bm{x}_k,\bm{u}_k)=0.5\theta^2 (e_k^1+e_k^2)$ with $e_k^i = x_{k,1}^i-x_{k,1}^3$. We assume 
noise $\bm{\omega}_k\sim\mathcal{N}(\bm{0},\sigma_n^2\bm{I})$, $\sigma_n=0.01$, and a coupling parameter $\theta\sim\mathcal{N}(0,1)$ independently sampled in each roll-out. The goal is to steer the positions $x_{k,1}^i$ of all agents 
to a target value. For this purpose, we define the reward using $\bm{W}_u=0.01\bm{I}$, $\bm{W}_{x}^i=\mathrm{diag}(0.1,0)$, $i=1,2$, and $\bm{W}_{x}^3=\mathrm{diag}(1,0)$, and set $\bm{x}_{\mathrm{ref}}=[\nicefrac{7}{4}, 0]^T$. As a safety constraint, we consider the requirement $|x_{k,1}^i|\leq 2$ encoded through the cost function $c(\bm{x}_k)=1-\sum_{i=1}^3 \frac{1}{3}\mathrm{sigm}_{10}(2^2-(x_{k,1}^i)^2)$, where $\mathrm{sigm}_{10}(x)=\nicefrac{1}{(1+\exp(-10x))}$.\looseness=-1

The trajectories resulting from the nominal control law and our proposed safety filter with $\beta=1$ and $\xi=5$ averaged over $20$ roll-outs are illustrated at the top of \cref{fig:sample_trajs}. The nominal control law causes a significant overshoot for agents 1 and 2 to pull the unactuated mass to the target position, which causes a considerable number of constraint violations. In contrast, our proposed safety filter slows down the agents and thereby maintains constraint satisfaction with a high probability. As illustrated at the bottom of \cref{fig:sample_trajs}, this behavior is mainly achieved via the worst-case approach \eqref{eq:safety_filter_pess}, which does not cause excessive conservatism in this example. It only becomes infeasible when one of the agents approaches the constraint boundary, where our safety filter \eqref{eq:overall_filter} maintains safety by switching to the proximity-based approach \eqref{eq:safety_filter_Lipschitz}.

\begin{figure}
    \centering
    \includegraphics[]{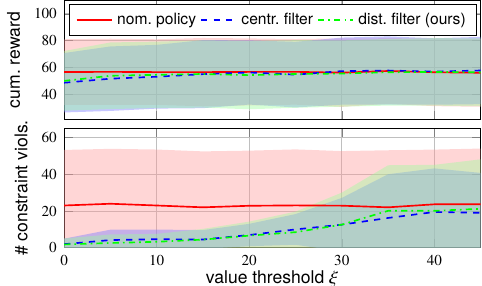}
    \caption{The cumulative reward (top) and number of constraint violations (bottom) for our distributed safety filter closely resemble the performance of the centralized baseline from \cite{lederer_risk-sensitive_2023}. While the rewards only deviate from the nominal policy for small safety thresholds $\xi$, a considerable impact on the number of constraint violations can still be observed for relatively large $\xi$. 
    Large standard deviation intervals (shaded areas) result from few roll-outs with negligible coupling $\theta\approx0$ causing a lack of controllability of the unactuated mass.
    }
    \label{fig:comparison}
\end{figure}

We additionally compare the performance and constraint satisfaction of our distributed safety filter to its centralized counterpart. As illustrated in \cref{fig:comparison}, our distributed safety filter yields results that closely resemble those of the centralized approach for a range of value function thresholds $\xi$. Moreover, both approaches cause a considerable deviation from the behavior of the nominal policy for small~$\xi$, while the amount of safety constraint violations continuously increases with growing thresholds. This observation stresses the effectiveness and flexibility of our distributed risk-sensitive safety filter for ensuring safety while limiting negative effects on performance.

\subsection{Agent Coupling via Collision Constraints}
\label{subsec:eval_constCouple}

In the second example, we consider $M$ independent agents $x_{k+1,1}^i=x_{k,1}^i+0.01 x_{k,2}^i + \theta\sin(x_{k,1}^i)+\omega_{k,1}^i$, $x_{k+1,2}^i=x_{k,2}^i+u_k^i+\omega_{k,2}^i$
with noise $\bm{\omega}_k^i\sim\mathcal{N}(0,\sigma_n^2I)$, $\sigma_n=0.1$, and parameter $\theta\sim\mathcal{N}(0,1)$ independently sampled in each roll-out. The nominal task of the agents is to move to the origin, for which we define the reward using $\bm{W}_u=0.1\bm{I}$,  $\bm{W}_x^i=\mathrm{diag}(1,0.1)$, and $\bm{x}_{\mathrm{ref}}=\bm{0}$. While the agents are not coupled in the dynamics, we couple them through collision constraints $\|\bm{x}_{k,1}^i-\bm{x}_{k,1}^j\|\geq 0.2$ for all $i\neq j$. We encode these constrains via the cost $c(\bm{x}_k)=\frac{1}{M}\sum_{i=1}^M \mathrm{sigm}_{10}(0.04-\min_j e_{k,j}^i)$ where $e_{k,j}^i=(x_{k,1}^i-x_{k,1}^j)^2$.

\begin{figure}
    \centering
    \includegraphics[]{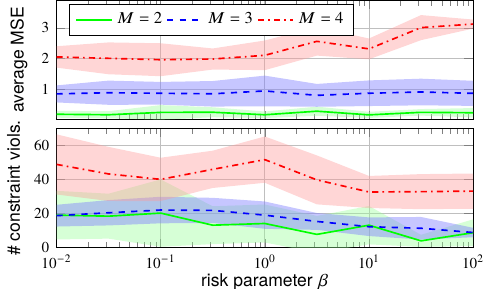}
    \caption{Top: While the tracking error is barely affected by the risk-aversion for $M=2$ and $M=3$, the increased difficulty of ensuring safety for $M=4$ causes a growth of the MSE with growing~$\beta$. Bottom: Irrespective of the number of agents, the number of safety violations reduces with growing risk-aversion $\beta$. Shaded areas denote one standard deviation intervals.\looseness=-1
    }
    \label{fig:agent_dep}
\end{figure}

For this system, we investigate the dependency of the proposed distributed safety filter on the number of agents and the risk parameter $\beta$ with randomly initialized trajectories. As illustrated at the bottom of \cref{fig:agent_dep}, the number of safety violations averaged over $10$ roll-outs decreases with growing risk aversion $\beta$ irrespective of the number of agents\footnote{We use $1e5$ training samples to learn the value function for $M=4$ to account for the higher state dimensionality with 4 agents.}. Thereby, it validates the behavior of the probability of safety guaranteed by \cref{prop:safety}. 
Since the limit cases $\beta\rightarrow 0$ and $\beta\rightarrow \infty$ correspond to the risk-neutral and robust safety filter \cite{286253}, respectively, a close approximation of the performance of common baselines can be seen at the edges of \cref{fig:agent_dep}. This illustrates the flexibility of risk sensitivity enabling the interpolation between these extremes.
Moreover, we can see a growth of the number of constraint violations with an increasing number of agents $M$. This effect can be attributed to the growing likelihood of safety violations for random initial states in combination with a significant growth of the problem difficulty from 3 to 4 agents. 
Note that our proposed distributed safety filter is generally limited to small-scale MASs due the required knowledge of the overall system state and the quickly growing complexity of the optimization problems in the pessimistic formulation \eqref{eq:safety_filter_pess}.
As depicted at the top of \cref{fig:agent_dep}, a similar behavior can be observed for the mean squared error (MSE), which only increases for $M=4$. However, it effectively remains constant for 2 and 3 agents which highlights the benefits of our risk-averse approach: A risk-based handling of uncertainties can allow to improve safety while barely affecting the performance.\looseness=-1

\section{Conclusion}\label{sec:conc}

We proposed a distributed risk-sensitive safety filter for uncertain MASs using CBFs that are defined via value functions. Using risk-averse CBF conditions and switching between different filtering strategies, we obtain a feasible and distributed formulation for ensuring safety. Detailed numerical evaluations demonstrate the effectiveness and flexibility of our approach. In future work, we will explore distributed optimization methods for allowing a decentralized training of the CBFs. Moreover, we will develop variants of our approach that ensure safety without access to the full state information of all agents to increase its scalability.

\bibliographystyle{IEEEtran}
\bibliography{refs.bib}

\begin{thebibliography}{10}
\providecommand{\url}[1]{#1}
\csname url@samestyle\endcsname
\providecommand{\newblock}{\relax}
\providecommand{\bibinfo}[2]{#2}
\providecommand{\BIBentrySTDinterwordspacing}{\spaceskip=0pt\relax}
\providecommand{\BIBentryALTinterwordstretchfactor}{4}
\providecommand{\BIBentryALTinterwordspacing}{\spaceskip=\fontdimen2\font plus
\BIBentryALTinterwordstretchfactor\fontdimen3\font minus \fontdimen4\font\relax}
\providecommand{\BIBforeignlanguage}[2]{{%
\expandafter\ifx\csname l@#1\endcsname\relax
\typeout{** WARNING: IEEEtran.bst: No hyphenation pattern has been}%
\typeout{** loaded for the language `#1'. Using the pattern for}%
\typeout{** the default language instead.}%
\else
\language=\csname l@#1\endcsname
\fi
#2}}
\providecommand{\BIBdecl}{\relax}
\BIBdecl

\bibitem{gronauer_multi-agent_2022}
S.~Gronauer and K.~Diepold, ``Multi-{Agent} {Deep} {Reinforcement} {Learning}: {A} {Survey},'' \emph{Artificial Intelligence Review}, vol.~55, no.~2, pp. 895--943, 2022.

\bibitem{wabersich_data-driven_2023}
K.~P. Wabersich, A.~J. Taylor, J.~J. Choi, K.~Sreenath, C.~J. Tomlin, A.~D. Ames, and M.~N. Zeilinger, ``Data-{Driven} {Safety} {Filters}: {Hamilton}-{Jacobi} {Reachability}, {Control} {Barrier} {Functions}, and {Predictive} {Methods} for {Uncertain} {Systems},'' \emph{IEEE Control Systems Magazine}, vol.~43, no.~5, pp. 137--177, 2023.

\bibitem{Ames2017}
A.~D. Ames, X.~Xu, J.~W. Grizzle, and P.~Tabuada, ``Control {Barrier} {Function} {Based} {Quadratic} {Programs} for {Safety} {Critical} {Systems},'' \emph{IEEE Transactions on Automatic Control}, vol.~62, no.~8, pp. 3861--3876, 2017.

\bibitem{cohen2024safety}
M.~H. Cohen, T.~G. Molnar, and A.~D. Ames, ``Safety-critical control for autonomous systems: Control barrier functions via reduced-order models,'' \emph{Annual Reviews in Control}, vol.~57, p. 100947, 2024.

\bibitem{agrawal_discrete_2017}
A.~Agrawal and K.~Sreenath, ``Discrete {Control} {Barrier} {Functions} for {Safety}-{Critical} {Control} of {Discrete} {Systems} with {Application} to {Bipedal} {Robot} {Navigation},'' in \emph{Robotics: {Systems} and {Science}}, 2017.

\bibitem{so_how_2024}
O.~So, Z.~Serlin, M.~Mann, J.~Gonzales, K.~Rutledge, N.~Roy, and C.~Fan, ``How to {Train} {Your} {Neural} {Control} {Barrier} {Function}: {Learning} {Safety} {Filters} for {Complex} {Input}-{Constrained} {Systems},'' in \emph{Proceedings of the {IEEE} {International} {Conference} on {Robotics} and {Automation}}, 2024, pp. 11\,532--11\,539.

\bibitem{Curi2022}
S.~Curi, A.~Lederer, S.~Hirche, and A.~Krause, ``Safe {Reinforcement} {Learning} via {Confidence}-{Based} {Filters},'' in \emph{Proceedings of the {IEEE} {Conference} on {Decision} and {Control}}, 2022, pp. 3409--3415.

\bibitem{cosner_robust_2023}
R.~K. Cosner, P.~Culbertson, A.~J. Taylor, and A.~D. Ames, ``Robust {Safety} under {Stochastic} {Uncertainty} with {Discrete}-{Time} {Control} {Barrier} {Functions},'' in \emph{Robotics: {Science} and {Systems}}, 2023.

\bibitem{Ahmadi2022}
M.~Ahmadi, X.~Xiong, and A.~D. Ames, ``Risk-{Averse} {Control} via {CVaR} {Barrier} {Functions}: {Application} to {Bipedal} {Robot} {Locomotion},'' \emph{IEEE Control Systems Letters}, vol.~6, pp. 878--883, 2022.

\bibitem{lederer_risk-sensitive_2023}
A.~Lederer, E.~Noorani, J.~S. Baras, and S.~Hirche, ``Risk-{Sensitive} {Inhibitory} {Control} for {Safe} {Reinforcement} {Learning},'' in \emph{Proceedings of the {IEEE} {Conference} on {Decision} and {Control}}, 2023, pp. 1040--1045.

\bibitem{GARG2024100948}
K.~Garg, S.~Zhang, O.~So, C.~Dawson, and C.~Fan, ``Learning safe control for multi-robot systems: Methods, verification, and open challenges,'' \emph{Annual Reviews in Control}, vol.~57, p. 100948, 2024.

\bibitem{ahmadi_safe_2019}
M.~Ahmadi, A.~Singletary, J.~W. Burdick, and A.~D. Ames, ``Safe {Policy} {Synthesis} in {Multi}-{Agent} {POMDPs} via {Discrete}-{Time} {Barrier} {Functions},'' in \emph{Proceedings of the {IEEE} {Conference} on {Decision} and {Control}}, 2019, pp. 4979--4803.

\bibitem{cheng_safe_2020}
R.~Cheng, M.~J. Khojasteh, A.~D. Ames, and J.~W. Burdick, ``Safe {Multi}-{Agent} {Interaction} through {Robust} {Control} {Barrier} {Functions} with {Learned} {Uncertainties},'' in \emph{Proceedings of the {IEEE} conference on {Decision} and {Control}}, 2020, pp. 777--783.

\bibitem{zhang_discrete_2025}
S.~Zhang, O.~So, M.~Black, and C.~Fan, ``Discrete {GCBF} {Proximal} {Policy} {Optimization} for {Multi}-agent {Safe} {Optimal} {Control},'' in \emph{Proceedings of the {International} {Conference} on {Learning} {Representations}}, 2025.

\bibitem{Rasmussen2006}
C.~E. Rasmussen and C.~K.~I. Williams, \emph{Gaussian {Processes} for {Machine} {Learning}}.\hskip 1em plus 0.5em minus 0.4em\relax Cambridge, MA: The MIT Press, 2006.

\bibitem{rothfuss_bridging_2024}
J.~Rothfuss, B.~Sukhija, L.~Treven, F.~Dörfler, S.~Coros, and A.~Krause, ``Bridging the {Sim}-to-{Real} {Gap} with {Bayesian} {Inference},'' in \emph{Proceedings of the IEEE/RSJ International Conference on Intelligent Robots and Systems}, 2024, pp. 10\,784--10\,791.

\bibitem{286253}
M.~James, J.~Baras, and R.~Elliott, ``{Risk-Sensitive Control and Dynamic Games for Partially Observed Discrete-Time Nonlinear Systems},'' \emph{IEEE Transactions on Automatic Control}, vol.~39, no.~4, pp. 780--792, 1994.

\bibitem{Lederer2019}
A.~Lederer, J.~Umlauft, and S.~Hirche, ``Uniform~{Error}~{Bounds}~for~{Gaussian}~{Process} {Regression} with {Application} to {Safe} {Control},'' in \emph{Advances~in~{Neural}~{Information}~{Processing}~{Systems}}, 2019, pp. 659--669.

\bibitem{schulman_proximal_2017}
\BIBentryALTinterwordspacing
J.~Schulman, F.~Wolski, P.~Dhariwal, A.~Radford, and O.~Klimov, ``Proximal {Policy} {Optimization} {Algorithms},'' 2017. [Online]. Available: \url{http://arxiv.org/abs/1707.06347}
\BIBentrySTDinterwordspacing

\end{thebibliography}

\end{document}